\pdfoutput=1
\documentclass[10pt]{article}

\usepackage[letterpaper,textwidth=5.5in,textheight=9in,centering]{geometry}
\usepackage[T1]{fontenc}
\usepackage{times}
\usepackage{microtype}
\usepackage[authoryear,round]{natbib}
\setcitestyle{citesep={;},aysep={,},yysep={;}}
\usepackage{titlesec}
\titleformat*{\section}{\large\bfseries}
\titleformat*{\subsection}{\normalsize\bfseries}
\titleformat*{\subsubsection}{\normalsize\bfseries\itshape}
\titlespacing*{\section}{0pt}{2.2ex plus .5ex minus .2ex}{1.2ex plus .3ex minus .2ex}
\titlespacing*{\subsection}{0pt}{1.8ex plus .5ex minus .2ex}{0.8ex plus .2ex}
\titlespacing*{\subsubsection}{0pt}{1.5ex plus .5ex minus .2ex}{0.5ex plus .2ex}
\renewenvironment{abstract}{%
  \small
  \begin{center}{\bfseries\abstractname\vspace{-.5em}\vspace{0pt}}\end{center}%
  \quote}{\endquote}

\usepackage{hyperref}
\usepackage{url}
\usepackage{amsmath}
\usepackage{amssymb}
\usepackage{amsthm}
\newtheorem{proposition}{Proposition}
\usepackage{arydshln}
\usepackage{booktabs}
\usepackage{multirow}
\usepackage{xcolor}
\usepackage{graphicx}
\usepackage{algorithm}
\usepackage{algpseudocode}
\usepackage[capitalise]{cleveref}
\newcommand{\gain}[1]{\textcolor{teal}{\ensuremath{\uparrow #1}}}
\newcommand{\loss}[1]{\textcolor{purple}{$\downarrow$#1}}

\newcommand{\method}{RidgeRank}

\definecolor{linkred}{rgb}{0.60,0.10,0.10}
\definecolor{citeblue}{rgb}{0.10,0.25,0.60}
\hypersetup{
  colorlinks=true,
  linkcolor=linkred,
  citecolor=citeblue,
  urlcolor=citeblue,
  pdftitle={\method: Fast Visual Document Reranking via Score Fusion and a Shallow Linear Readout},
  pdfauthor={Shubing Yang, Dongfang Zhao}
}

\title{\method: Efficient Visual Document Reranking via Score Fusion and a Shallow Linear Readout}

\author{%
  Shubing Yang\\
  University of Washington\\
  \texttt{sueyoung@uw.edu}
  \and
  Dongfang Zhao\\
  University of Washington\\
  \texttt{dzhao@cs.washington.edu}%
}

\date{}

\begin{document}

\maketitle

\begin{abstract}
Multimodal language models rerank visual document retrieval results accurately, but scoring every candidate page at full cost makes them slow. Some methods that compress these rerankers need relevance labels to regain accuracy, and they rank by the reranker score alone. \method{} measures how much relevance signal the reranker score lacks and recovers it from the retriever score through a closed-form fusion rule. Maximizing a correlation objective gives the optimal fusion weight, along with the exact condition under which the reranker score by itself cannot reach that optimum. The reranker is further corrected by a single vector applied to an intermediate hidden state, obtained through one centered ridge regression onto the same model's full-depth scores on uncompressed pages. On 12 datasets drawn from ViDoRe 2 and ViDoRe 3, evaluated with two retrievers and two language model backbones, \method{} brings NDCG@5 to within 1.2\,pp of a full cross encoder with speedups of up to 48 times, advancing the accuracy and latency Pareto frontier for visual document reranking.
\end{abstract}

\section{Introduction}

Multimodal language model rerankers improve ranking quality in visual document retrieval, but their cost grows with the number of pages scored. Visual document retrieval supports question answering and generation over long, visually rich documents \citep{visrag,m3docrag,vdocrag,mmdocir} in domains such as biomedical lectures, economics reports and ESG disclosures \citep{vidore2}. These applications require low response latency, yet an uncompressed reranker processes each candidate page as a full image.

Several methods that accelerate these rerankers rely on relevance labels to regain the accuracy lost to compression, and they rank with the reranker score alone. They reduce inference cost by caching page prefixes offline \citep{prettr,deformer,minireranker}, stopping at intermediate layers \citep{deebert,calm,see,e2rank}, and subsampling or quantizing stored visual positions \citep{rankprune,kivi,ziprerank}. These limitations motivate calibrating the compressed reranker with the full model's scores and incorporating retriever scores into the final ranking.

At the core of \method{} is a closed-form fusion rule that quantifies the relevance information absent from the reranker score and supplies it from the retriever score. The two scores are normalized to zero mean and unit variance within each candidate list and ranked by their weighted combination. The operating point is the weight that maximizes the correlation of this combination with relevance, and it depends only on the correlation of each score with relevance and the correlation between the two scores. The same derivation yields the exact condition under which the retriever carries information about relevance beyond the reranker and should receive a positive weight.

A second component corrects the reranker through one vector applied to a shallow hidden state, which a single centered ridge solve fits to reproduce the score the model gives without compression. That score is the margin between the two answer token logits and equals an inner product with the model's final hidden state, which makes a linear readout of an earlier state the natural object to fit. Centering the fit within each candidate list lets it learn how candidates of the same query differ from one another, the only information the ranking uses. Its ridge strength is chosen by the same centered error on held out queries, and the whole fit is therefore determined by the model's own scores while the backbone stays frozen.

Evaluation spans 12 datasets with two retrievers and two multimodal backbones, where \method{} gets within 1.2\,pp of the full cross encoder in NDCG@5 at up to 48 times the speed.

The paper makes three contributions.

\begin{itemize}
    \item A correlation objective quantifies the relevance information that the reranker score lacks. Its optimal operating point has a closed form, and the same analysis gives the exact condition under which the reranker score by itself cannot attain it. (\Cref{sec:fusion})
    \item \method{} calibrates the reranker with a shallow linear readout, fitted to the model's own scores by one centered ridge solve with no gradient updates and no relevance judgements. A bound links agreement with this target to correlation with relevance. (\Cref{sec:readout,sec:objective})
    \item On 12 ViDoRe 2 and ViDoRe 3 datasets, with two retrievers and two backbones (Qwen2.5-VL-7B and GLM-4.1V-9B), \method{} comes to within 1.2\,pp NDCG@5 of a full cross encoder at up to 48 times lower latency. (\Cref{sec:eval})
\end{itemize}

\section{Related Work}

\textbf{Multimodal document reranking.} Late interaction over image patches removed the parsing stage from document retrieval and set the standard for first stage recall \citep{colbert,colpali}, and the ViDoRe benchmarks made page level retrieval the common ground for comparing such systems \citep{colpali,vidore2}. A second stage built on a multimodal language model improves the order further. The prevailing form is pointwise. Each candidate is scored on its own and the resulting score replaces the retriever's \citep{monoqwen,minireranker}, and the strongest current reranker orders a list by the margin between its yes and no tokens alone \citep{qwen3vlrerank}. Hybrid systems combine two retrievers in representation space \citep{gqr}, and test time methods feed the reranker score back into the query representation and retrieve again \citep{refit,tour}. We instead characterize, for a fixed candidate set, the relevance information that the reranker discards and the retriever score still carries.

\textbf{Efficient inference for rerankers.} Query independent parts of the input are encoded once and cached, so a request runs only the query against a stored prefix \citep{prettr,deformer}. The forward pass is halted early, either adaptively per instance or at a layer fixed in advance \citep{deebert,calm,see,e2rank}. What is stored is reduced by discarding visual positions and by quantizing the tensors that remain \citep{rankprune,kivi}. A listwise variant drops autoregressive decoding by scoring a whole candidate list in one pass \citep{ziprerank}, and recent multimodal rerankers combine caching, early exit and cache compression to report large savings on ViDoRe \citep{minireranker}. We adopt these techniques unchanged and claim none of them. Some methods train the compressed model to recover accuracy. We fit a readout through a closed form solve on unlabelled queries.

\textbf{First stage scores and layerwise readouts.} Neural and first stage scores are combined by rank \citep{rrf} or by a weighted sum whose weight is tuned on labelled queries \citep{fusionfunctions,wanglin,fastforward}, or the first stage score is fed into the reranker as an input \citep{askari}. In these works the weight is a hyperparameter searched anew for each collection, and when the first stage score adds information beyond the reranker is left unexplained. We instead derive the optimal operating point in closed form, together with the exact condition under which the reranker score alone is insufficient. Linear readouts of intermediate states have been fitted to reproduce the final layer for interpretation \citep{tunedlens}, trained into the network by supervising shallow layers with deep ones \citep{byot,fastbert,nanovdr}, or trained contrastively on the layers of a frozen retriever \citep{miner}. Each of them is obtained by gradient training and corrects a single change to the network. Our readout is a single vector solved in closed form against the frozen reranker's own uncompressed score, and it compensates for the reduced depth and the compressed input at once.

\begin{figure}[t]
\centering
\includegraphics[width=\textwidth]{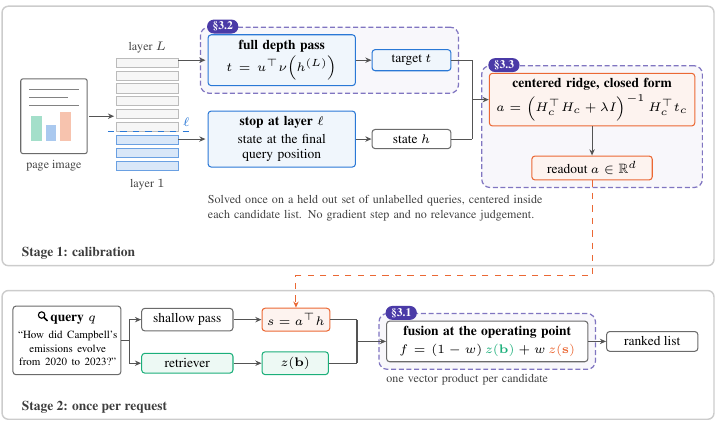}
\caption{\method. \textbf{Stage 1} runs once on unlabelled query page pairs. A full depth pass on the uncompressed page gives the target $t$, a pass stopped at layer $\ell$ gives the state $h$, and one centered ridge solve turns the pairs into the readout $a$. \textbf{Stage 2} runs once per request. A pass stopped at layer $\ell$ gives the state of each candidate, $a$ scores it with one vector product, and that score is fused with the retriever's at the operating point.}
\label{fig:overview}
\end{figure}

\section{Method}
\label{sec:method}

A two stage retrieval system returns a set $\mathcal{C}_q$ of $K$ candidate pages and retriever scores $\mathbf{b}_q \in \mathbb{R}^{K}$ for query $q$. A multimodal language model then scores these candidates to produce $\mathbf{s}_q \in \mathbb{R}^{K}$. We combine both scores to determine the final ranking. The retriever fixes the candidate set, so reranking changes only its order and requires no additional retrieval.

Following prior work \citep{prettr,deformer,minireranker}, we encode the query independent part of the prompt, comprising the page image and a fixed instruction, once offline and store the first $\ell$ layers in int8. We keep a fraction $\gamma$ of the $n_{\mathrm{vis}}$ visual positions \citep{rankprune,kivi}. At inference, only the query suffix runs over the stored prefix. We treat $\ell$ and $\gamma$ as fixed serving settings rather than parameters selected by our method.

\Cref{fig:overview} summarizes the method. \Cref{sec:fusion} derives the fusion weight, \Cref{sec:readout} defines the self-supervised target, and \Cref{sec:objective} gives the centered ridge solution.

\subsection{Fusion as Operating Point}
\label{sec:fusion}

We rank by a convex combination of the two scores, each standardized inside the candidate set because the two are on different scales that vary from query to query. The standardization is

\begin{equation}
  z(\mathbf{v})_i \;=\; \frac{v_i - \mu(\mathbf{v})}{\sigma(\mathbf{v})},
  \qquad \mathbf{v} \in \mathbb{R}^{K},
  \label{eq:zscore}
\end{equation}

where $\mu$ and $\sigma$ are the mean and standard deviation over the $K$ entries, and the ranking is by

\begin{equation}
  f_{q,p} \;=\; (1-w)\, z(\mathbf{b}_q)_p \;+\; w\, z(\mathbf{s}_q)_p,
  \qquad w \in [0,1].
  \label{eq:fusion}
\end{equation}

At $w=1$ the retriever score is discarded and ranking uses the reranker score alone, as in pointwise multimodal reranking. At $w=0$ the reranker is ignored. The following result characterizes when the correlation objective favors an interior weight.

\begin{proposition}
\label{prop:weight}
Write $\mathbf{y} \in \mathbb{R}^{K}$ for the standardized graded relevance of a candidate set and set
\begin{equation}
  c_b = \tfrac{1}{K}\langle z(\mathbf{b}), \mathbf{y}\rangle, \qquad
  c_s = \tfrac{1}{K}\langle z(\mathbf{s}), \mathbf{y}\rangle, \qquad
  \rho = \tfrac{1}{K}\langle z(\mathbf{b}), z(\mathbf{s})\rangle ,
  \label{eq:corrs}
\end{equation}
so that $c_b$ and $c_s$ are the correlations of the two scores with relevance and $\rho$ is their correlation with each other. If $|\rho| < 1$, $c_b > \rho c_s$, and $c_s > \rho c_b$, the weight maximizing the correlation between $\mathbf{f}$ of Equation \ref{eq:fusion} and $\mathbf{y}$ over $w \in [0,1]$ is
\begin{equation}
  w^{\star} \;=\; \frac{c_s - \rho\, c_b}{(c_b + c_s)(1-\rho)},
  \label{eq:wstar}
\end{equation}
which lies strictly inside $(0,1)$.
\end{proposition}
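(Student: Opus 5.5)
Since $z(\mathbf{b})$ and $z(\mathbf{s})$ both have zero mean and $\frac{1}{K}\|z(\cdot)\|^2=1$, the fused vector $\mathbf{f}$ of Equation~\ref{eq:fusion} also has zero mean. Its correlation with the standardized $\mathbf{y}$ is therefore $\langle \mathbf{f},\mathbf{y}\rangle/(K\,\sigma(\mathbf{f}))$, which in terms of the quantities in Equation~\ref{eq:corrs} reads
\[
R(w) \;=\; \frac{(1-w)c_b + w c_s}{\sqrt{(1-w)^2 + w^2 + 2w(1-w)\rho}} .
\]
The plan is to maximize this one-variable function on $[0,1]$ directly. First I check that the denominator never vanishes. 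It equals $1-2w(1-w)(1-\rho)$, and since $w(1-w)\le 1/4$ it is at least $(1+\rho)/2$, which is positive because $|\rho|<1$. So $R$ is smooth on $[0,1]$.

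For the stationary point I would not differentiate the quotient head-on. Instead I set $u=(1-w,w)^\top$, $\mathbf{c}=(c_b,c_s)^\top$, and $\Sigma=\begin{pmatrix}1&\rho\\ \rho&1\end{pmatrix}$, so that $R=u^\top\mathbf{c}/\sqrt{u^\top\Sigma u}$. Since $\Sigma$ is positive definite when $|\rho|<1$, Cauchy--Schwarz in the $\Sigma$ inner product gives $R\le\sqrt{\mathbf{c}^\top\Sigma^{-1}\mathbf{c}}$. Equality holds exactly when $u$ is a positive multiple of $\Sigma^{-1}\mathbf{c}\propto(c_b-\rho c_s,\; c_s-\rho c_b)^\top$. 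Under the hypotheses $c_b>\rho c_s$ and $c_s>\rho c_b$, both components of this vector are positive, so it can be rescaled onto the segment $u_1+u_2=1$. The rescaled point is
\[
w^\star=\frac{c_s-\rho c_b}{(c_b-\rho c_s)+(c_s-\rho c_b)}=\frac{c_s-\rho c_b}{(c_b+c_s)(1-\rho)},
\]
which is Equation~\ref{eq:wstar}. Because this $u$ attains the global upper bound over all $u$ with $u^\top\mathbf{c}>0$, it is in particular the maximizer over $w\in[0,1]$.

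The positivity of both components also shows that $w^\star$ lies strictly between $0$ and $1$. It remains to confirm that $c_b+c_s>0$, so that the denominator is well defined and the sign bookkeeping is valid. Adding the two hypotheses gives $(c_b+c_s)(1-\rho)>0$, and $1-\rho>0$, so $c_b+c_s>0$.

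The main obstacle is the uniqueness and sign subtlety. Cauchy--Schwarz gives equality for any \emph{positive} multiple, so I must check that the numerator $u^\top\mathbf{c}$ is positive at $w^\star$; otherwise the point would minimize rather than maximize. It is positive, since it equals $\mathbf{c}^\top\Sigma^{-1}\mathbf{c}$ times a positive scalar, and $\mathbf{c}\neq 0$. Equality in Cauchy--Schwarz with positive $\Sigma$ then forces $u$ up to positive scaling. Because the constraint $u_1+u_2=1$ fixes the scale, the maximizer is unique.
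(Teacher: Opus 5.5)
Your proof is correct and follows the paper's own argument: you write the correlation as $u^\top c/\sqrt{u^\top\Sigma u}$, apply Cauchy--Schwarz in the $\Sigma$ geometry to identify the maximizing ray $\Sigma^{-1}c\propto(c_b-\rho c_s,\,c_s-\rho c_b)$, and rescale onto $u_1+u_2=1$ using the positivity of both components. Your extra checks are sound and only make explicit what the paper leaves implicit: the lower bound $(1+\rho)/2$ on the squared norm of $\mathbf{f}$, the derivation $c_b+c_s>0$, and uniqueness from the equality case.
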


\begin{proof}[Proof sketch]
Writing $\alpha = 1-w$ and $\beta = w$, the correlation between $\mathbf{f}$ and $\mathbf{y}$ is $(\alpha c_b + \beta c_s)/\sqrt{\alpha^{2} + \beta^{2} + 2\alpha\beta\rho}$, a linear form divided by the norm induced by the score correlation matrix $\Sigma$. This matrix has diagonal entries one and off diagonal entries $\rho$; $|\rho|<1$ ensures that it is positive definite. By the Cauchy Schwarz inequality the correlation is maximized along $\Sigma^{-1}(c_b,c_s)^{\top}$, proportional to $(c_b-\rho c_s,\;c_s-\rho c_b)$. Both components are positive under the stated conditions, and rescaling them to $\alpha+\beta=1$ gives Equation \ref{eq:wstar}. \Cref{sec:proof1} gives the full proof.
\end{proof}

The condition $c_b > \rho c_s$ says that the retriever contributes information about relevance beyond its correlation with the reranker score. The difference $c_b-\rho c_s$ is the numerator of the retriever's partial correlation with relevance after accounting for the reranker. The corresponding condition $c_s > \rho c_b$ ensures that the reranker also receives positive weight. When both hold, ranking by either score alone gives a lower value of the correlation objective. These conditions concern the contribution of each score to relevance, not whether the two scores assign similar values to individual candidates. We read the three quantities of Equation \ref{eq:corrs} as averages over a corpus, each computed with an inner product weighted by the discount of the ranking metric, $\omega_i=1/\log_2(1+\text{rank}_i)$ under the retriever's order. The statement remains valid with this weighted inner product (\Cref{sec:proof1}). We deploy $w = w^{\star}$, with every vector standardized and the three correlations read in this weighted inner product on another corpus of the same benchmark, so the corpus being ranked contributes no relevance judgement to its own weight. The fusion itself keeps the standardization of Equation \ref{eq:zscore}.

\subsection{Self-Supervised Target}
\label{sec:readout}

The reranker score is a linear readout of the state at layer $\ell$. We run the query suffix over the stored prefix to that layer and take the hidden state at the final suffix position, $h_{q,p} \in \mathbb{R}^{d}$, where $d$ is the backbone width, so that

\begin{equation}
  s_{q,p} \;=\; a^{\top} h_{q,p}, \qquad a \in \mathbb{R}^{d},
  \label{eq:readout}
\end{equation}

whose cost at request time is one vector product per candidate, negligible beside the forward pass that produces $h_{q,p}$.

The target is the score the same backbone gives at full depth on the uncompressed page. Let $L$ be the final layer, let $h^{(L)}_{q,p}$ be the state at the final position of that uncompressed forward, and write $o_{\texttt{yes}}$ and $o_{\texttt{no}}$ for the logits of the two answer tokens. The target is the decision margin

\begin{equation}
  t_{q,p} \;=\; o_{\texttt{yes}}(q,p) \;-\; o_{\texttt{no}}(q,p)
           \;=\; u^{\top} \nu\big(h^{(L)}_{q,p}\big),
  \qquad u = u_{\texttt{yes}} - u_{\texttt{no}},
  \label{eq:teacher}
\end{equation}

where $u_{\texttt{yes}}$ and $u_{\texttt{no}}$ are the unembedding rows of the two answer tokens and $\nu$ is the final normalization the backbone applies before unembedding. The full depth model is used only during offline calibration.

Equation \ref{eq:teacher} motivates a linear readout but does not imply an exact linear map from the deployed state to the target. The final score is a linear function of the normalized final state; both the intervening layers and the final normalization can be nonlinear. The deployed state $h_{q,p}$ differs from $h^{(L)}_{q,p}$ because depth is truncated from $L$ to $\ell$ and the stored prefix is subsampled and quantized. Writing $W$ for a linear surrogate of the map from the deployed state to the uncompressed one on the subspace spanned by calibration features, and $\nu'$ for a local linear approximation to $\nu$ on that subspace, we obtain

\begin{equation}
  t_{q,p} \;\approx\; u^{\top} \nu' W\, h_{q,p}
           \;=\; \big(W^{\top} \nu'^{\top} u\big)^{\top} h_{q,p},
  \label{eq:collapse}
\end{equation}

which has the form of Equation \ref{eq:readout} with $a=W^{\top}\nu'^{\top}u$. This approximation combines the effects of depth truncation and cache compression in $W$. The readout estimates only the direction of that map relevant to the target margin, rather than the entire final state. Thus one fitted vector can compensate for both changes when the linear approximation holds on the calibration distribution. Any constant offset in a local approximation does not affect ranking and is removed by query centering in \Cref{sec:objective}. The approximation is empirical and need not describe every possible shallow state.

\subsection{Centered Ridge in Closed Form}
\label{sec:objective}

Only the order within a candidate set matters, and Equation \ref{eq:fusion} is unchanged when a constant is added to every score for one query. We give the fit the same invariance by removing the query mean from states and targets. Let $J \in \mathbb{R}^{K \times K}$ be the matrix of all ones and

\begin{equation}
  P \;=\; I - \tfrac{1}{K} J
  \label{eq:centering}
\end{equation}

the projector that removes the mean, with $I$ the identity of the indicated size. Stacking the states of a query into $H_q \in \mathbb{R}^{K \times d}$ and its targets into $t_q \in \mathbb{R}^{K}$, the readout solves

\begin{equation}
  \min_{a \in \mathbb{R}^{d}}\;
  \sum_{q \in \mathcal{Q}_{\mathrm{fit}}}
  \big\lVert P\big(H_q a - t_q\big) \big\rVert_2^{2}
  \;+\; \lambda \lVert a \rVert_{2}^{2},
  \label{eq:objective}
\end{equation}

over a fixed calibration set $\mathcal{Q}_{\mathrm{fit}}$, with $\lambda > 0$ the ridge strength. Since $P$ is idempotent and symmetric, $H_q^{\top} P H_q = (P H_q)^{\top} (P H_q)$, so Equation \ref{eq:objective} is an ordinary ridge problem on the centered data and carries no intercept. Collecting the centered states of every query into $H_c$ and the centered targets into $t_c$, its solution is

\begin{equation}
  a \;=\; \big( H_c^{\top} H_c + \lambda I \big)^{-1} H_c^{\top} t_c ,
  \label{eq:solution}
\end{equation}

obtained by one linear solve. This fit requires no gradient step and leaves the backbone unchanged. \Cref{alg:method} states the two procedures in full. Although Equation \ref{eq:objective} uses no relevance judgement, the following proposition relates agreement with its teacher target to the relevance correlation $c_s$ in Proposition \ref{prop:weight}.

\begin{proposition}
\label{prop:bound}
Fix a candidate set and let $c_t = \tfrac{1}{K}\langle z(\mathbf{t}), \mathbf{y}\rangle$ be the correlation of the target with relevance and $r = \tfrac{1}{K}\langle z(\mathbf{s}), z(\mathbf{t})\rangle$ the correlation of the model score with the target. Then
\begin{equation}
  c_s \;\ge\; r\, c_t \;-\; \sqrt{\big(1-r^{2}\big)\big(1-c_t^{2}\big)},
  \label{eq:bound}
\end{equation}
and for $r, c_t \in (0,1)$ the right hand side is strictly increasing in $r$.
\end{proposition}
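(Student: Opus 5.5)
The plan is to treat the three standardized vectors $z(\mathbf{s})$, $z(\mathbf{t})$ and $\mathbf{y}$ as points on the sphere of radius $\sqrt{K}$. After dividing by $\sqrt{K}$ they are unit vectors in $\mathbb{R}^K$, and all three lie in the hyperplane orthogonal to the all-ones vector. The quantities $c_s$, $r$ and $c_t$ are then cosines of angles between unit vectors. Write $\theta_{st}$, $\theta_{ty}$ and $\theta_{sy}$ for these angles, all taken in $[0,\pi]$. The claimed bound is the triangle inequality for angular distance on the sphere:
\[
\theta_{sy} \le \theta_{st} + \theta_{ty}.
\]
The right hand side of Equation \ref{eq:bound} is exactly $\cos(\theta_{st}+\theta_{ty})$, since $\sin\theta = \sqrt{1-\cos^2\theta}$ is nonnegative on $[0,\pi]$.

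To avoid invoking spherical geometry, I would prove the inequality directly. Decompose $\hat{\mathbf{s}} = r\,\hat{\mathbf{t}} + \sqrt{1-r^2}\,\mathbf{e}$ and $\hat{\mathbf{y}} = c_t\,\hat{\mathbf{t}} + \sqrt{1-c_t^2}\,\mathbf{e}'$. Here hats denote unit normalization, and $\mathbf{e},\mathbf{e}'$ are unit vectors orthogonal to $\hat{\mathbf{t}}$; if $|r|=1$ or $|c_t|=1$ the corresponding term vanishes and any choice works. Taking the inner product gives
\[
c_s = r\,c_t + \sqrt{(1-r^2)(1-c_t^2)}\,\langle \mathbf{e},\mathbf{e}'\rangle.
\]
Cauchy--Schwarz gives $\langle \mathbf{e},\mathbf{e}'\rangle \ge -1$, which yields Equation \ref{eq:bound} at once. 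Equivalently, the bound follows from positive semidefiniteness of the $3\times 3$ Gram matrix of the three unit vectors. The orthogonal decomposition is cleaner and is the route I would write.

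For monotonicity, I would differentiate $g(r) = r c_t - \sqrt{(1-r^2)(1-c_t^2)}$ on $(0,1)$:
\[
g'(r) = c_t + \frac{r\sqrt{1-c_t^2}}{\sqrt{1-r^2}}.
\]
Both terms are positive when $r, c_t\in(0,1)$, so $g'(r)>0$. One can also see this from $g(r)=\cos(\arccos r + \arccos c_t)$: on that range the sum of the two angles lies in $(0,\pi)$, where cosine is decreasing, and the sum decreases as $r$ increases.

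The only step needing care is the degenerate case. The standardization of Equation \ref{eq:zscore} requires nonzero variance, so I would state that each of $\mathbf{s}$, $\mathbf{t}$ and $\mathbf{y}$ is assumed nonconstant on the candidate set, as the definitions already implicitly require. Apart from this, the argument is routine, and there is no real obstacle.
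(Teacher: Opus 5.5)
Your proof is correct, and your monotonicity step is the same derivative computation the paper gives. The difference is in the inequality. The paper takes the route you mention only in passing: it expands $\det R = 1 - r^2 - c_t^2 - c_s^2 + 2 r c_t c_s \ge 0$ for the Gram matrix $R$ of $z(\mathbf{s})$, $z(\mathbf{t})$, $\mathbf{y}$, and reads this as a quadratic in $c_s$ with roots $r c_t \pm \sqrt{(1-r^2)(1-c_t^2)}$.

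The two arguments are linked by the identity
$\det R = (1-r^2)(1-c_t^2) - (c_s - r c_t)^2 = (1-r^2)(1-c_t^2)\bigl(1 - \langle \mathbf{e},\mathbf{e}'\rangle^2\bigr)$.
This is the determinant of the Schur complement obtained by projecting out $\hat{\mathbf{t}}$, which is the Gram matrix of your residuals. So $\det R \ge 0$ is exactly your Cauchy--Schwarz step.

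Your version gives an exact identity in which the slack of the bound is the partial correlation $\langle \mathbf{e},\mathbf{e}'\rangle$ of score and relevance given the target. It also yields the matching upper bound at no extra cost. The paper's version is a mechanical determinant check that builds no auxiliary vectors and needs no remark about $|r| = 1$ or $|c_t| = 1$.

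One caution about your geometric motivation. The spherical triangle inequality $\theta_{sy} \le \theta_{st} + \theta_{ty}$ by itself gives $\cos\theta_{sy} \ge \cos(\theta_{st}+\theta_{ty})$ only when $\theta_{st}+\theta_{ty} \le \pi$. Beyond that you also need the perimeter bound $\theta_{sy}+\theta_{st}+\theta_{ty} \le 2\pi$, which is the triangle inequality routed through $-\hat{\mathbf{s}}$. The argument you would actually write is the decomposition, which holds for all $r, c_t \in [-1,1]$, so this does not affect your proof.
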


\begin{proof}[Proof sketch]
The correlation matrix of $z(\mathbf{s})$, $z(\mathbf{t})$ and $\mathbf{y}$ is a Gram matrix and hence positive semidefinite, so its determinant $1 - r^{2} - c_t^{2} - c_s^{2} + 2\, r\, c_t\, c_s$ is at least zero. Read as a quadratic in $c_s$, this confines $c_s$ between the roots $r\, c_t \pm \sqrt{(1-r^{2})(1-c_t^{2})}$, and the lower root is Equation \ref{eq:bound}. Its derivative in $r$ is $c_t + r\sqrt{(1-c_t^{2})/(1-r^{2})}$, which is positive on $(0,1)$. \Cref{sec:proof2} gives the full proof.
\end{proof}

The objective fits the teacher margin without reading a relevance judgement. Once the backbone is fixed, $c_t$ is fixed as well, while agreement $r$ is the term in Equation \ref{eq:bound} that the readout can affect. For any candidate set with $r,c_t \in (0,1)$, increasing $r$ raises the lower bound on $c_s$. At $\lambda \to 0$, the solution of Equation \ref{eq:objective} maximizes squared correlation with the teacher over the centered calibration data within the linear family. This is a global property of the fit and does not guarantee an increase in $r$ or $c_s$ for every candidate set. Positive $\lambda$ limits the size of the fitted coefficients and can reduce variance when calibration data are limited. We select $\lambda$ by minimizing the centered squared error of Equation \ref{eq:objective}, without the penalty, on held out queries. This uses the target of Equation \ref{eq:teacher} without relevance annotation (\Cref{sec:lambda}).

\begin{algorithm}[t]
\caption{\method.}
\label{alg:method}
\begin{algorithmic}[1]
\Require calibration queries $\mathcal{Q}_{\mathrm{fit}}$, depth $\ell$, ridge $\lambda$, weight $w$
\Statex
\Function{Fit}{$\mathcal{Q}_{\mathrm{fit}}$}\Comment{once, no relevance judgement}
  \ForAll{$q \in \mathcal{Q}_{\mathrm{fit}},\; p \in \mathcal{C}_q$}
    \State $h_{q,p} \gets \textsc{Shallow}(q, p, \ell)$
    \State $t_{q,p} \gets o_{\texttt{yes}} - o_{\texttt{no}}$ \textbf{from} $\textsc{Forward}(q, p,\, \mathrm{depth}{=}L,\, \mathrm{uncompressed})$
  \EndFor
  \State $H_c, t_c \gets P H,\; P t$ \Comment{$P = I - \tfrac{1}{K} J$, per candidate set}
  \State \Return $a \gets (H_c^{\top} H_c + \lambda I)^{-1} H_c^{\top} t_c$
\EndFunction
\Statex
\Function{Rank}{$q$}\Comment{online, per request}
  \State $\mathcal{C}_q,\, \mathbf{b}_q \gets \textsc{Retrieve}(q)$
  \ForAll{$p \in \mathcal{C}_q$}
    \State $s_{q,p} \gets a^{\top}\, \textsc{Shallow}(q, p, \ell)$
  \EndFor
  \State \Return $\textsc{ArgSort}\big((1-w)\, z(\mathbf{b}_q) + w\, z(\mathbf{s}_q)\big)$
\EndFunction
\end{algorithmic}
\end{algorithm}

\section{Evaluation}
\label{sec:eval}

\Cref{sec:quality} reports ranking quality, \Cref{sec:tradeoff} the accuracy and latency tradeoff, \Cref{sec:operating} the validation of the two propositions, \Cref{sec:ablation} the component ablation, and \Cref{sec:robustness} the parameter sensitivity.

\textbf{Datasets and models.} We evaluate on ViDoRe 2 \citep{vidore2} and ViDoRe 3 \citep{vidore3}, 1038 and 4730 queries over four and eight corpora, on one NVIDIA A100 with 40 GiB. ColNomic-7B \citep{colnomic} and ColPali-v1.3 \citep{colpali} each supply twenty candidates, and Qwen2.5-VL-7B \citep{qwen25vl} and GLM-4.1V-9B \citep{glm41v} each supply a readout, taken at layer 21 from one third of the visual positions.

\textbf{Baselines.} We compare against the retriever alone, against Guided Query Refinement \citep{gqr} under the retriever its published configuration uses, and against a full cross encoder, which serves as a full model reference, not a competing baseline. Two hybrid baselines, RRF \citep{rrf} and Score Aggregation \citep{fusionfunctions}, fuse the retriever with the text retriever Linq-Embed \citep{linq}, Qwen3-Reranker-0.6B \citep{qwen3emb} reranks the page text and MonoQwen2-VL \citep{monoqwen} the page image, both trained on relevance judgements, and the weight of every fused baseline is picked by NDCG@5 on the same inner corpus that sets ours. The reference is the full Qwen 7B scoring every candidate image; both backbones are compared against it.

For each backbone, we fit four readouts using the 154 calibration queries sampled from the four ViDoRe 2 corpora under leave one corpus out. In each fold, queries from three corpora are used for fitting, and the resulting readout evaluates the remaining corpus. On ViDoRe 3, each corpus is evaluated with all four readouts, and we report the mean of the resulting metrics. We assign one fusion weight per corpus using the three correlations defined in Proposition \ref{prop:weight}, estimated on the next corpus in a fixed cyclic order. Weight estimation is performed separately within ViDoRe 2 and ViDoRe 3.

\subsection{Ranking Quality}
\label{sec:quality}

Table \ref{tab:ndcg5_vidore2} and Table \ref{tab:ndcg5_vidore3} give NDCG@5 by subset. On ViDoRe 2 with ColNomic candidates the retriever scores 60.9 and the full cross encoder 72.4, leaving 11.5 points of distance. GQR closes 2.9 of them, RRF, Score Aggregation and Qwen3-Reranker-0.6B reach 62.3, 62.2 and 61.6, and MonoQwen2-VL reaches 65.4. The frozen readout closes 8.1 with the Qwen backbone and 10.3 with the GLM backbone, which is 70 and 89 percent of what the cross encoder is worth.

NDCG@5 is the primary metric throughout. Table \ref{tab:metrics} in Appendix \ref{sec:extra} reports four further metrics for every arm, and \method{} leads every external baseline under all of them. The readout improves on the retriever in every one of the four settings, by 8.1 and 10.3 points on ViDoRe 2 with ColNomic candidates, by 14.7 and 15.8 with ColPali, and by 3.4 and 3.0 on ViDoRe 3.

The same four vectors serve both retrievers and both benchmarks. They were fitted on ColNomic candidates over four ViDoRe 2 corpora, so the ColPali pools and every ViDoRe 3 corpus are outside the fit entirely, and the coefficients still close 85 and 91 percent of the ColPali distance on ViDoRe 2 and 82 percent of it on ViDoRe 3 with either backbone.

The primary comparison is against the full cross encoder reference. On ViDoRe 2 the GLM readout sits 1.2 points below the full cross encoder with ColNomic candidates and 1.5 points below it with ColPali candidates, so a vector fitted once on 154 queries comes within two points of the reference cross encoder, at a small fraction of its cost.

\begin{table}[t]
\centering
\small
\setlength{\tabcolsep}{3.5pt}
\caption{NDCG@5 (\%) on all four subsets of ViDoRe 2. Within each block the deltas are taken against the retriever alone. The full cross encoder scores every candidate from its image and is reported as a reference, not as a competing baseline. GQR is listed only under the retriever its published configuration uses.}
\label{tab:ndcg5_vidore2}
\resizebox{\textwidth}{!}{%
\begin{tabular}{llrlrlrlrlrl}
\toprule
\multirow{2}{*}{Retriever} & \multirow{2}{*}{Reranker} & \multicolumn{2}{c}{Avg} & \multicolumn{2}{c}{Biomed} & \multicolumn{2}{c}{Econ} & \multicolumn{2}{c}{ESG-H} & \multicolumn{2}{c}{ESG-F} \\
\cmidrule(lr){3-12}
 &  & val & $\Delta$ & val & $\Delta$ & val & $\Delta$ & val & $\Delta$ & val & $\Delta$ \\
\midrule
\multirow{9}{*}{ColNomic-7B} & No reranking & 60.9 & {\scriptsize --} & 63.9 & {\scriptsize --} & 55.1 & {\scriptsize --} & 70.7 & {\scriptsize --} & 53.8 & {\scriptsize --} \\
 & Full cross encoder & 72.4 & {\scriptsize $+$11.5} & 71.7 & {\scriptsize $+$7.8} & 68.2 & {\scriptsize $+$13.1} & 83.1 & {\scriptsize $+$12.4} & 66.7 & {\scriptsize $+$12.9} \\
 & RRF & 62.3 & {\scriptsize $+$1.4} & 63.1 & {\scriptsize $-$0.8} & 58.7 & {\scriptsize $+$3.6} & 68.2 & {\scriptsize $-$2.6} & 59.1 & {\scriptsize $+$5.3} \\
 & Score Aggregation & 62.2 & {\scriptsize $+$1.3} & 63.6 & {\scriptsize $-$0.2} & 57.4 & {\scriptsize $+$2.2} & 69.9 & {\scriptsize $-$0.9} & 58.0 & {\scriptsize $+$4.2} \\
 & GQR & 63.8 & {\scriptsize $+$2.9} & 66.3 & {\scriptsize $+$2.4} & 59.4 & {\scriptsize $+$4.2} & 69.3 & {\scriptsize $-$1.4} & 60.1 & {\scriptsize $+$6.3} \\
 & Qwen3-Reranker-0.6B & 61.6 & {\scriptsize $+$0.7} & 64.3 & {\scriptsize $+$0.4} & 57.1 & {\scriptsize $+$2.0} & 68.3 & {\scriptsize $-$2.5} & 56.9 & {\scriptsize $+$3.1} \\
 & MonoQwen2-VL & 65.4 & {\scriptsize $+$4.5} & 67.6 & {\scriptsize $+$3.7} & 59.3 & {\scriptsize $+$4.1} & 73.7 & {\scriptsize $+$2.9} & 61.0 & {\scriptsize $+$7.2} \\
\cmidrule(lr){2-12}
 & Ours (Qwen2.5-VL-7B) & 69.0 & {\scriptsize\gain{8.1}} & 71.4 & {\scriptsize\gain{7.5}} & 64.8 & {\scriptsize\gain{9.6}} & 76.9 & {\scriptsize\gain{6.1}} & 63.1 & {\scriptsize\gain{9.3}} \\
 & Ours (GLM-4.1V-9B) & 71.2 & {\scriptsize\gain{10.3}} & 72.1 & {\scriptsize\gain{8.3}} & 65.8 & {\scriptsize\gain{10.7}} & 79.1 & {\scriptsize\gain{8.4}} & 67.8 & {\scriptsize\gain{14.0}} \\
\midrule
\multirow{8}{*}{ColPali-v1.3} & No reranking & 50.2 & {\scriptsize --} & 56.2 & {\scriptsize --} & 46.3 & {\scriptsize --} & 49.7 & {\scriptsize --} & 48.8 & {\scriptsize --} \\
 & Full cross encoder & 67.6 & {\scriptsize $+$17.3} & 68.8 & {\scriptsize $+$12.7} & 64.1 & {\scriptsize $+$17.9} & 70.0 & {\scriptsize $+$20.3} & 67.4 & {\scriptsize $+$18.5} \\
 & RRF & 55.8 & {\scriptsize $+$5.5} & 59.9 & {\scriptsize $+$3.7} & 53.5 & {\scriptsize $+$7.2} & 56.8 & {\scriptsize $+$7.1} & 52.9 & {\scriptsize $+$4.0} \\
 & Score Aggregation & 56.8 & {\scriptsize $+$6.6} & 60.2 & {\scriptsize $+$4.1} & 54.9 & {\scriptsize $+$8.6} & 57.2 & {\scriptsize $+$7.5} & 54.9 & {\scriptsize $+$6.1} \\
 & Qwen3-Reranker-0.6B & 54.7 & {\scriptsize $+$4.5} & 56.5 & {\scriptsize $+$0.4} & 51.8 & {\scriptsize $+$5.6} & 57.8 & {\scriptsize $+$8.0} & 52.7 & {\scriptsize $+$3.8} \\
 & MonoQwen2-VL & 62.6 & {\scriptsize $+$12.4} & 64.8 & {\scriptsize $+$8.6} & 56.1 & {\scriptsize $+$9.9} & 67.3 & {\scriptsize $+$17.6} & 62.3 & {\scriptsize $+$13.5} \\
\cmidrule(lr){2-12}
 & Ours (Qwen2.5-VL-7B) & 64.9 & {\scriptsize\gain{14.7}} & 68.3 & {\scriptsize\gain{12.2}} & 62.4 & {\scriptsize\gain{16.2}} & 67.4 & {\scriptsize\gain{17.6}} & 62.6 & {\scriptsize\gain{13.8}} \\
 & Ours (GLM-4.1V-9B) & 66.1 & {\scriptsize\gain{15.8}} & 69.1 & {\scriptsize\gain{13.0}} & 62.9 & {\scriptsize\gain{16.7}} & 68.2 & {\scriptsize\gain{18.5}} & 64.1 & {\scriptsize\gain{15.3}} \\
\bottomrule
\end{tabular}}
\end{table}

\begin{table}[t]
\centering
\small
\setlength{\tabcolsep}{3.5pt}
\caption{NDCG@5 (\%) on all eight public subsets of ViDoRe 3. Within each block the deltas are taken against the retriever alone. The full cross encoder scores every candidate from its image and is reported as a reference, not as a competing baseline. GQR is listed only under the retriever its published configuration uses.}
\label{tab:ndcg5_vidore3}
\resizebox{\textwidth}{!}{%
\begin{tabular}{llrlrlrlrlrlrlrlrlrl}
\toprule
\multirow{2}{*}{Retriever} & \multirow{2}{*}{Reranker} & \multicolumn{2}{c}{Avg} & \multicolumn{2}{c}{cs} & \multicolumn{2}{c}{energy} & \multicolumn{2}{c}{fin en} & \multicolumn{2}{c}{fin fr} & \multicolumn{2}{c}{hr} & \multicolumn{2}{c}{indus} & \multicolumn{2}{c}{pharma} & \multicolumn{2}{c}{physics} \\
\cmidrule(lr){3-20}
 &  & val & $\Delta$ & val & $\Delta$ & val & $\Delta$ & val & $\Delta$ & val & $\Delta$ & val & $\Delta$ & val & $\Delta$ & val & $\Delta$ & val & $\Delta$ \\
\midrule
\multirow{9}{*}{ColNomic-7B} & No reranking & 52.8 & {\scriptsize --} & 72.4 & {\scriptsize --} & 61.9 & {\scriptsize --} & 43.4 & {\scriptsize --} & 42.1 & {\scriptsize --} & 54.1 & {\scriptsize --} & 43.7 & {\scriptsize --} & 57.8 & {\scriptsize --} & 47.1 & {\scriptsize --} \\
 & Full cross encoder & 58.0 & {\scriptsize $+$5.2} & 76.6 & {\scriptsize $+$4.1} & 67.5 & {\scriptsize $+$5.7} & 52.1 & {\scriptsize $+$8.6} & 48.6 & {\scriptsize $+$6.6} & 59.4 & {\scriptsize $+$5.3} & 50.0 & {\scriptsize $+$6.3} & 61.9 & {\scriptsize $+$4.0} & 48.2 & {\scriptsize $+$1.1} \\
 & RRF & 51.4 & {\scriptsize $-$1.4} & 70.6 & {\scriptsize $-$1.9} & 57.4 & {\scriptsize $-$4.4} & 41.1 & {\scriptsize $-$2.3} & 42.6 & {\scriptsize $+$0.5} & 51.2 & {\scriptsize $-$2.9} & 42.2 & {\scriptsize $-$1.5} & 58.9 & {\scriptsize $+$1.1} & 47.4 & {\scriptsize $+$0.3} \\
 & Score Aggregation & 53.5 & {\scriptsize $+$0.7} & 72.4 & {\scriptsize $+$0.0} & 62.2 & {\scriptsize $+$0.3} & 44.2 & {\scriptsize $+$0.8} & 43.0 & {\scriptsize $+$0.9} & 54.8 & {\scriptsize $+$0.7} & 44.5 & {\scriptsize $+$0.8} & 59.6 & {\scriptsize $+$1.7} & 47.2 & {\scriptsize $+$0.1} \\
 & GQR & 54.3 & {\scriptsize $+$1.5} & 73.4 & {\scriptsize $+$1.0} & 63.4 & {\scriptsize $+$1.5} & 45.6 & {\scriptsize $+$2.2} & 43.4 & {\scriptsize $+$1.3} & 55.2 & {\scriptsize $+$1.1} & 45.1 & {\scriptsize $+$1.5} & 59.9 & {\scriptsize $+$2.1} & 48.1 & {\scriptsize $+$1.0} \\
 & Qwen3-Reranker-0.6B & 54.0 & {\scriptsize $+$1.2} & 68.8 & {\scriptsize $-$3.6} & 64.0 & {\scriptsize $+$2.1} & 45.7 & {\scriptsize $+$2.3} & 44.2 & {\scriptsize $+$2.2} & 56.5 & {\scriptsize $+$2.4} & 45.0 & {\scriptsize $+$1.3} & 60.2 & {\scriptsize $+$2.4} & 47.3 & {\scriptsize $+$0.2} \\
 & MonoQwen2-VL & 54.5 & {\scriptsize $+$1.7} & 73.9 & {\scriptsize $+$1.5} & 63.1 & {\scriptsize $+$1.3} & 47.4 & {\scriptsize $+$4.0} & 43.8 & {\scriptsize $+$1.7} & 55.4 & {\scriptsize $+$1.3} & 45.9 & {\scriptsize $+$2.2} & 59.1 & {\scriptsize $+$1.3} & 47.7 & {\scriptsize $+$0.6} \\
\cmidrule(lr){2-20}
 & Ours (Qwen2.5-VL-7B) & 56.2 & {\scriptsize\gain{3.4}} & 76.3 & {\scriptsize\gain{3.9}} & 66.2 & {\scriptsize\gain{4.4}} & 48.7 & {\scriptsize\gain{5.2}} & 45.1 & {\scriptsize\gain{3.1}} & 57.6 & {\scriptsize\gain{3.6}} & 47.8 & {\scriptsize\gain{4.1}} & 61.2 & {\scriptsize\gain{3.4}} & 46.9 & {\scriptsize\loss{0.2}} \\
 & Ours (GLM-4.1V-9B) & 55.8 & {\scriptsize\gain{3.0}} & 76.4 & {\scriptsize\gain{4.0}} & 65.2 & {\scriptsize\gain{3.3}} & 47.9 & {\scriptsize\gain{4.5}} & 43.8 & {\scriptsize\gain{1.8}} & 57.1 & {\scriptsize\gain{3.0}} & 46.8 & {\scriptsize\gain{3.1}} & 62.8 & {\scriptsize\gain{5.0}} & 47.7 & {\scriptsize\gain{0.6}} \\
\midrule
\multirow{8}{*}{ColPali-v1.3} & No reranking & 38.8 & {\scriptsize --} & 57.3 & {\scriptsize --} & 43.9 & {\scriptsize --} & 27.3 & {\scriptsize --} & 23.6 & {\scriptsize --} & 40.1 & {\scriptsize --} & 30.1 & {\scriptsize --} & 48.0 & {\scriptsize --} & 39.8 & {\scriptsize --} \\
 & Full cross encoder & 51.1 & {\scriptsize $+$12.3} & 73.5 & {\scriptsize $+$16.2} & 58.3 & {\scriptsize $+$14.4} & 39.8 & {\scriptsize $+$12.4} & 37.7 & {\scriptsize $+$14.1} & 53.6 & {\scriptsize $+$13.4} & 42.0 & {\scriptsize $+$11.9} & 57.5 & {\scriptsize $+$9.6} & 46.1 & {\scriptsize $+$6.2} \\
 & RRF & 42.9 & {\scriptsize $+$4.2} & 60.2 & {\scriptsize $+$2.8} & 48.5 & {\scriptsize $+$4.5} & 29.5 & {\scriptsize $+$2.2} & 28.9 & {\scriptsize $+$5.3} & 45.6 & {\scriptsize $+$5.5} & 32.5 & {\scriptsize $+$2.4} & 54.2 & {\scriptsize $+$6.3} & 44.3 & {\scriptsize $+$4.4} \\
 & Score Aggregation & 44.6 & {\scriptsize $+$5.8} & 64.5 & {\scriptsize $+$7.2} & 51.2 & {\scriptsize $+$7.2} & 31.3 & {\scriptsize $+$4.0} & 31.0 & {\scriptsize $+$7.4} & 45.1 & {\scriptsize $+$5.0} & 32.8 & {\scriptsize $+$2.7} & 55.9 & {\scriptsize $+$8.0} & 44.9 & {\scriptsize $+$5.0} \\
 & Qwen3-Reranker-0.6B & 47.8 & {\scriptsize $+$9.0} & 68.0 & {\scriptsize $+$10.6} & 56.8 & {\scriptsize $+$12.9} & 35.7 & {\scriptsize $+$8.3} & 33.7 & {\scriptsize $+$10.2} & 49.8 & {\scriptsize $+$9.7} & 36.9 & {\scriptsize $+$6.8} & 54.6 & {\scriptsize $+$6.7} & 46.6 & {\scriptsize $+$6.8} \\
 & MonoQwen2-VL & 46.7 & {\scriptsize $+$7.9} & 67.2 & {\scriptsize $+$9.8} & 54.6 & {\scriptsize $+$10.6} & 34.3 & {\scriptsize $+$7.0} & 32.0 & {\scriptsize $+$8.4} & 48.3 & {\scriptsize $+$8.2} & 38.7 & {\scriptsize $+$8.6} & 54.9 & {\scriptsize $+$7.0} & 43.7 & {\scriptsize $+$3.9} \\
\cmidrule(lr){2-20}
 & Ours (Qwen2.5-VL-7B) & 48.8 & {\scriptsize\gain{10.0}} & 71.8 & {\scriptsize\gain{14.4}} & 56.8 & {\scriptsize\gain{12.9}} & 36.5 & {\scriptsize\gain{9.2}} & 34.5 & {\scriptsize\gain{10.9}} & 51.6 & {\scriptsize\gain{11.5}} & 39.2 & {\scriptsize\gain{9.1}} & 57.5 & {\scriptsize\gain{9.5}} & 45.6 & {\scriptsize\gain{5.8}} \\
 & Ours (GLM-4.1V-9B) & 48.8 & {\scriptsize\gain{10.1}} & 75.1 & {\scriptsize\gain{17.7}} & 55.8 & {\scriptsize\gain{11.8}} & 35.0 & {\scriptsize\gain{7.6}} & 32.8 & {\scriptsize\gain{9.2}} & 50.9 & {\scriptsize\gain{10.8}} & 37.5 & {\scriptsize\gain{7.4}} & 57.8 & {\scriptsize\gain{9.8}} & 46.8 & {\scriptsize\gain{7.0}} \\
\bottomrule
\end{tabular}}
\end{table}

\subsection{Accuracy and Latency Tradeoff}
\label{sec:tradeoff}

Figure \ref{fig:frontier} places accuracy against measured cost, and Table \ref{tab:efficiency_vidore2} in Appendix \ref{sec:extra} gives the same numbers for both retrievers. Every latency is the mean over the whole query set and not over a sample. On ViDoRe 2 the retriever costs 60 ms, GQR costs 130 ms, Qwen3-Reranker-0.6B costs 344 ms, MonoQwen2-VL costs 2860 ms, our method costs 156 ms with the Qwen backbone and 141 ms with the GLM backbone, and the full cross encoder costs 6765 ms. The GLM configuration is therefore 48 times cheaper than the cross encoder it matches, and 11 ms more expensive than GQR while closing 65 more percentage points of the available distance.

The curve of Figure \ref{fig:frontier} is steep up to about 150 ms. The first 80 ms of reranking buys 10.3 points and the remaining 6.6 seconds buy 1.2 more. Both backbones sit at the corner, which is the part of the curve a deployment would choose from. On ViDoRe 3 the same ordering holds against the cross encoder, at 213 ms and 233 ms against 7534 ms, but the margin over GQR narrows to 1.6 points.

Figure \ref{fig:breakdown} splits each cost into retrieval, reading what a method keeps for each page, and model computation. \method{} computes for 50 ms with the GLM backbone and 56 ms with the Qwen backbone, against 282 ms for Qwen3-Reranker-0.6B over the page text, 2.6 s for MonoQwen2-VL over the page images and 6.7 s for the full cross encoder.

\begin{figure}[t]
\centering
\sloppy
\begin{minipage}[t]{0.315\textwidth}
  \centering
  \includegraphics[width=\linewidth]{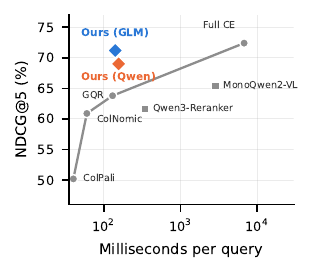}
  \caption{Pareto frontier.}
  \label{fig:frontier}
\end{minipage}\hfill
\begin{minipage}[t]{0.315\textwidth}
  \centering
  \includegraphics[width=\linewidth]{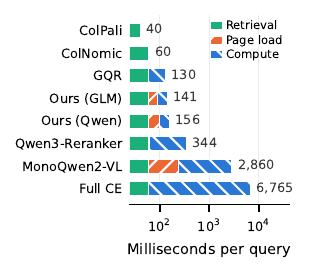}
  \caption{Latency breakdown.}
  \label{fig:breakdown}
\end{minipage}\hfill
\begin{minipage}[t]{0.315\textwidth}
  \centering
  \includegraphics[width=\linewidth]{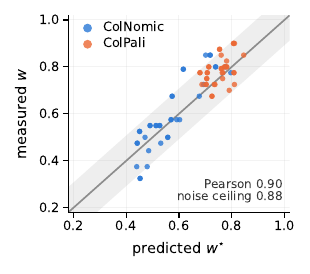}
  \caption{Predicted weight. The band is the 95\% interval of the measured weight.}
  \label{fig:weight}
\end{minipage}

\end{figure}

\subsection{Theoretical Validation}
\label{sec:operating}

We evaluate Proposition \ref{prop:weight} on 72 subsets: the full cross encoder and both readouts on every corpus under both retrievers. Both $c_b-\rho c_s$ and $c_s-\rho c_b$ are at least 0.085, giving interior optima. We define the empirical optimum as the mean of weights within one NDCG@5 point of the best. Over 200 random half splits, Spearman--Brown correction gives reliabilities of 0.903 and 0.863 for the predicted and measured weights. Their geometric mean, 0.883, estimates a correlation ceiling under parallel measurements and independent errors. Shared annotations violate error independence here, so this ceiling does not apply. \Cref{fig:weight} reports a correlation of 0.898, with the weaker retriever leaving more weight to the reranker. Here the correlations are read on each subset itself, which tests the proposition, while the deployed weight reads them on another corpus.

Proposition \ref{prop:bound} ties the agreement Equation \ref{eq:objective} maximizes without labels to $c_s$. We test the tie by holding the corpus fixed, which leaves $c_t$ constant, and moving only the fit over every layer crossed with every regularization strength. \Cref{fig:chain} shows $r$ and $c_s$ correlating from 0.894 to 0.994 over that family, with a median of 0.975 on the eight pairings of backbone and corpus.

\subsection{Component Ablation}
\label{sec:ablation}

Table \ref{tab:ablation} reports component ablations for Qwen2.5-VL-7B on the calibration sample. Each row removes one component with the remaining settings fixed.

The hardest row is the logit lens, which reads the same cached state through the backbone's own unembedding and costs nothing to fit. The solve is worth 4.1 points over it. Consuming the model score on its own costs 2.7 points, which is Proposition \ref{prop:weight} measured. Centering inside the candidate list is worth 0.6 points, which is what Equation \ref{eq:objective} buys over a pointwise regression with an intercept.

\begin{figure}[t]
\centering
\begin{minipage}[t]{0.315\textwidth}
  \centering
  \includegraphics[width=\linewidth]{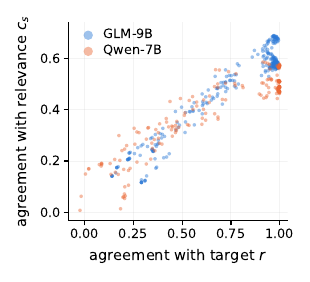}
  \caption{Objective alignment.}
  \label{fig:chain}
\end{minipage}\hfill
\begin{minipage}[t]{0.315\textwidth}
  \centering
  \includegraphics[width=\linewidth]{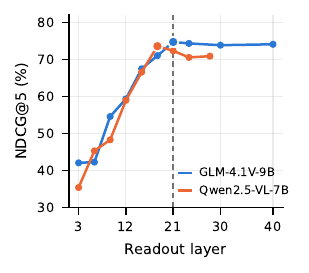}
  \caption{Accuracy by layer.}
  \label{fig:depth}
\end{minipage}\hfill
\begin{minipage}[t]{0.315\textwidth}
  \centering
  \includegraphics[width=\linewidth]{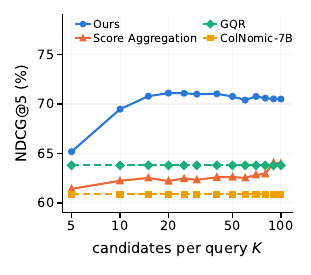}
  \caption{Pool depth.}
  \label{fig:pooldepth}
\end{minipage}
\end{figure}

\begin{table}[t]
\centering
\small
\setlength{\tabcolsep}{2pt}
\caption{Component ablation on the 154 calibration queries under leave one corpus out, with $w=0.85$ unless fusion is removed.}
\label{tab:ablation}
\begin{tabular*}{\textwidth}{@{\extracolsep{\fill}}lrlrlrlrlrl}
\toprule
\multirow{2}{*}{Component} & \multicolumn{2}{c}{Avg} & \multicolumn{2}{c}{Biomed} & \multicolumn{2}{c}{Econ} & \multicolumn{2}{c}{ESG-H} & \multicolumn{2}{c}{ESG-F} \\
\cmidrule(lr){2-11}
 & val & $\Delta$ & val & $\Delta$ & val & $\Delta$ & val & $\Delta$ & val & $\Delta$ \\
\midrule
($-$) Operating point & 69.6 & \loss{2.7} & 54.8 & \loss{4.7} & 71.1 & \loss{2.7} & 75.5 & \loss{3.7} & 76.0 & \loss{0.8} \\
($-$) Fitted readout & 68.3 & \loss{4.1} & 55.4 & \loss{4.0} & 71.5 & \loss{2.3} & 77.5 & \loss{1.7} & 68.6 & \loss{8.3} \\
($-$) List centering & 71.7 & \loss{0.6} & 59.1 & \loss{0.4} & 73.0 & \loss{0.9} & 79.0 & \loss{0.2} & 75.9 & \loss{0.9} \\
\midrule
\textbf{\method} & \textbf{72.3} & -- & \textbf{59.4} & -- & \textbf{73.8} & -- & \textbf{79.2} & -- & \textbf{76.8} & -- \\
\bottomrule
\end{tabular*}
\end{table}

\subsection{Parameter Sensitivity}
\label{sec:robustness}

Every input the method takes sits on a plateau. \Cref{fig:depth} reads the state at each depth of the backbone and finds both peaking before their last layer. Stopping at an intermediate layer follows prior work \citep{prettr,deformer,minireranker} and is not part of our contribution, and we fix $\ell = 21$ in advance because choosing the peak of this curve would read relevance judgements. \Cref{fig:pooldepth} scores pools from five to a hundred candidates with the coefficients frozen, which climbs to the deployed twenty and is flat over the fivefold range above it. The retriever's line there is flat by construction, since the top five of the top twenty is the top five of the top hundred. The ridge strength and the fusion weight behave the same way, moving ranking quality by less than 1.6 points across two decades of $\lambda$ and by less than half a point across the weights nearest the best.

The fit is also cheap in the only resource it consumes. Drawing calibration queries at random and refitting, sixteen queries already reach 99.4 percent of the quality obtained from all 154 on GLM and 98.3 percent on Qwen.

\section{Conclusion}

Multimodal rerankers improve visual document retrieval but incur substantial inference costs, and recovering accuracy after compression can require relevance judgements. \method{} combines retriever and reranker scores using a fusion weight derived in closed form from their correlations with relevance and with each other. The calibration target is the uncompressed model's answer token margin, computed by an inner product with its normalized final state. A shallow linear readout approximates this target through one centered ridge solve, using differences within candidate lists without gradient updates or relevance annotations. On ViDoRe 2 and ViDoRe 3, its NDCG@5 lies within 1.2\,pp of a full cross encoder at up to 48 times the speed.

\bibliography{references}

@inproceedings{gqr,
  title={Guided Query Refinement: Multimodal Hybrid Retrieval with Test-Time Optimization},
  author={Uzan, Omri and Yehudai, Asaf and Pony, Roi and Shnarch, Eyal and Gera, Ariel},
  booktitle={International Conference on Learning Representations},
  volume={2026},
  pages={495--522},
  year={2026}
}

@inproceedings{colpali,
  title={Colpali: Efficient document retrieval with vision language models},
  author={Faysse, Manuel and Sibille, Hugues and Wu, Tony and Omrani, Bilel and Viaud, Gautier and Hudelot, C{\'e}line and Colombo, Pierre},
  booktitle={International Conference on Learning Representations},
  volume={2025},
  pages={61424--61449},
  year={2025}
}

@inproceedings{colbert,
  title={Colbert: Efficient and effective passage search via contextualized late interaction over bert},
  author={Khattab, Omar and Zaharia, Matei},
  booktitle={Proceedings of the 43rd International ACM SIGIR conference on research and development in Information Retrieval},
  pages={39--48},
  year={2020}
}

@article{vidore2,
  title={Vidore benchmark v2: Raising the bar for visual retrieval},
  author={Mac{\'e}, Quentin and Loison, Ant{\'o}nio and Faysse, Manuel},
  journal={arXiv preprint arXiv:2505.17166},
  year={2025}
}

@article{minireranker,
  title={miniReranker: Efficient Multimodal Reranking through Visual Cache Reuse and Interaction Sparsity},
  author={Fan, Yingqi and Lu, Xuan and Zhao, Anhao and Tong, Junlong and Nie, Ping and Zou, Kai and Ma, Yunpu and Zhang, Wei and Shen, Xiaoyu},
  journal={arXiv preprint arXiv:2606.10759},
  year={2026}
}

@article{rankprune,
  title={From Saliency to Discriminability: Rank-Preserving Visual Token Pruning for VLM Rerankers},
  author={Liu, Siyi and Yang, Hanjun and Zhang, Chenchen and Zhu, Xiaorong and Zuo, Xinyu and Duan, Lisheng and Liang, Haijin and Ma, Jin and Pu, Junfu and Zhang, Yongqi},
  journal={arXiv preprint arXiv:2609.00667},
  year={2026}
}

@inproceedings{prettr,
  title={Efficient document re-ranking for transformers by precomputing term representations},
  author={MacAvaney, Sean and Nardini, Franco Maria and Perego, Raffaele and Tonellotto, Nicola and Goharian, Nazli and Frieder, Ophir},
  booktitle={Proceedings of the 43rd international ACM SIGIR conference on research and development in information retrieval},
  pages={49--58},
  year={2020}
}

@inproceedings{deformer,
  title={DeFormer: Decomposing pre-trained transformers for faster question answering},
  author={Cao, Qingqing and Trivedi, Harsh and Balasubramanian, Aruna and Balasubramanian, Niranjan},
  booktitle={Proceedings of the 58th Annual Meeting of the Association for Computational Linguistics},
  pages={4487--4497},
  year={2020}
}

@inproceedings{deebert,
  title={DeeBERT: Dynamic early exiting for accelerating BERT inference},
  author={Xin, Ji and Tang, Raphael and Lee, Jaejun and Yu, Yaoliang and Lin, Jimmy},
  booktitle={Proceedings of the 58th annual meeting of the association for computational linguistics},
  pages={2246--2251},
  year={2020}
}

@article{calm,
  title={Confident adaptive language modeling},
  author={Schuster, Tal and Fisch, Adam and Gupta, Jai and Dehghani, Mostafa and Bahri, Dara and Tran, Vinh and Tay, Yi and Metzler, Donald},
  journal={Advances in Neural Information Processing Systems},
  volume={35},
  pages={17456--17472},
  year={2022}
}

@inproceedings{see,
  title={Efficient re-ranking with cross-encoders via early exit},
  author={Busolin, Francesco and Lucchese, Claudio and Nardini, Franco Maria and Orlando, Salvatore and Perego, Raffaele and Trani, Salvatore and Veneri, Alberto},
  booktitle={Proceedings of the 48th International ACM SIGIR Conference on Research and Development in Information Retrieval},
  pages={2534--2544},
  year={2025}
}

@inproceedings{e2rank,
  title={E2rank: Efficient and effective layer-wise reranking},
  author={Campagnano, Cesare and Mallia, Antonio and Pertschuk, Jack and Silvestri, Fabrizio},
  booktitle={European Conference on Information Retrieval},
  pages={417--426},
  year={2025},
  organization={Springer}
}

@article{kivi,
  title={Kivi: A tuning-free asymmetric 2bit quantization for kv cache},
  author={Liu, Zirui and Yuan, Jiayi and Jin, Hongye and Zhong, Shaochen and Xu, Zhaozhuo and Braverman, Vladimir and Chen, Beidi and Hu, Xia},
  journal={arXiv preprint arXiv:2402.02750},
  year={2024}
}

@inproceedings{rrf,
  title={Reciprocal rank fusion outperforms condorcet and individual rank learning methods},
  author={Cormack, Gordon V and Clarke, Charles LA and Buettcher, Stefan},
  booktitle={Proceedings of the 32nd international ACM SIGIR conference on Research and development in information retrieval},
  pages={758--759},
  year={2009}
}

@article{fusionfunctions,
  title={An analysis of fusion functions for hybrid retrieval},
  author={Bruch, Sebastian and Gai, Siyu and Ingber, Amir},
  journal={ACM Transactions on Information Systems},
  volume={42},
  number={1},
  pages={1--35},
  year={2023},
  publisher={ACM New York, NY}
}

@inproceedings{wanglin,
  title={Bert-based dense retrievers require interpolation with bm25 for effective passage retrieval},
  author={Wang, Shuai and Zhuang, Shengyao and Zuccon, Guido},
  booktitle={Proceedings of the 2021 ACM SIGIR international conference on theory of information retrieval},
  pages={317--324},
  year={2021}
}

@article{fastforward,
  title={Efficient neural ranking using forward indexes and lightweight encoders},
  author={Leonhardt, Jurek and M{\"u}ller, Henrik and Rudra, Koustav and Khosla, Megha and Anand, Abhijit and Anand, Avishek},
  journal={ACM Transactions on Information Systems},
  volume={42},
  number={5},
  pages={1--34},
  year={2024},
  publisher={ACM New York, NY}
}

@inproceedings{askari,
  title={Injecting the BM25 score as text improves BERT-based re-rankers},
  author={Askari, Arian and Abolghasemi, Amin and Pasi, Gabriella and Kraaij, Wessel and Verberne, Suzan},
  booktitle={European Conference on Information Retrieval},
  pages={66--83},
  year={2023},
  organization={Springer}
}

@article{tunedlens,
  title={Eliciting latent predictions from transformers with the tuned lens},
  author={Belrose, Nora and Ostrovsky, Igor and McKinney, Lev and Furman, Zach and Smith, Logan and Halawi, Danny and Biderman, Stella and Steinhardt, Jacob},
  journal={arXiv preprint arXiv:2303.08112},
  year={2023}
}

@inproceedings{byot,
  title={Be your own teacher: Improve the performance of convolutional neural networks via self distillation},
  author={Zhang, Linfeng and Song, Jiebo and Gao, Anni and Chen, Jingwei and Bao, Chenglong and Ma, Kaisheng},
  booktitle={2019 IEEE/CVF International Conference on Computer Vision (ICCV)},
  pages={3712--3721},
  year={2019},
  organization={IEEE}
}

@inproceedings{fastbert,
  title={Fastbert: a self-distilling bert with adaptive inference time},
  author={Liu, Weijie and Zhou, Peng and Wang, Zhiruo and Zhao, Zhe and Deng, Haotang and Ju, Qi},
  booktitle={Proceedings of the 58th annual meeting of the association for computational linguistics},
  pages={6035--6044},
  year={2020}
}

@article{refit,
  title={ReFIT: Relevance feedback from a reranker during inference},
  author={Reddy, Revanth Gangi and Dasigi, Pradeep and Sultan, Md Arafat and Cohan, Arman and Sil, Avirup and Ji, Heng and Hajishirzi, Hannaneh},
  journal={arXiv preprint arXiv:2305.11744},
  year={2023}
}

@inproceedings{tour,
  title={Optimizing test-time query representations for dense retrieval},
  author={Sung, Mujeen and Park, Jungsoo and Kang, Jaewoo and Chen, Danqi and Lee, Jinhyuk},
  booktitle={Findings of the Association for Computational Linguistics: ACL 2023},
  pages={5731--5746},
  year={2023}
}

@article{ziprerank,
  title={Very efficient listwise multimodal reranking for long documents},
  author={Sun, Yiqun and Wei, Pengfei and Hsieh, Lawrence B},
  journal={arXiv preprint arXiv:2605.11864},
  year={2026}
}

@article{miner,
  title={MINER: Mining Multimodal Internal Representation for Efficient Retrieval},
  author={Li, Weien and Song, Rui and Li, Zeyu and Liu, Haochen and Zhang, Gonghao and Jiao, Difan and Tang, Zhenwei and He, Bowei and Wu, Haolun and Liu, Xue and others},
  journal={arXiv preprint arXiv:2605.06460},
  year={2026}
}

@article{qwen3vlrerank,
  title={Qwen3-vl-embedding and qwen3-vl-reranker: A unified framework for state-of-the-art multimodal retrieval and ranking},
  author={Li, Mingxin and Zhang, Yanzhao and Long, Dingkun and Chen, Keqin and Song, Sibo and Bai, Shuai and Yang, Zhibo and Xie, Pengjun and Yang, An and Liu, Dayiheng and others},
  journal={arXiv preprint arXiv:2601.04720},
  year={2026}
}

@inproceedings{vidore3,
  title={Vidore v3: A comprehensive evaluation of retrieval augmented generation in complex real-world scenarios},
  author={Loison, Ant{\'o}nio and Mac{\'e}, Quentin and Edy, Antoine and Xing, Victor and Balough, Tom and Moreira, Gabriel de Souza P and Liu, Bo and Faysse, Manuel and Hudelot, C{\'e}line and Viaud, Gautier},
  booktitle={Proceedings of the 64th Annual Meeting of the Association for Computational Linguistics (Volume 1: Long Papers)},
  pages={16570--16600},
  year={2026}
}

@article{nanovdr,
  title={NanoVDR: Distilling a 2B Vision-Language Retriever into a 70M Text-Only Encoder for Visual Document Retrieval},
  author={Liu, Zhuchenyang and Zhang, Yao and Xiao, Yu},
  journal={arXiv preprint arXiv:2603.12824},
  year={2026}
}

@article{glm41v,
  title={Glm-4.5 v and glm-4.1 v-thinking: Towards versatile multimodal reasoning with scalable reinforcement learning},
  author={Hong, Wenyi and Yu, Wenmeng and Gu, Xiaotao and Wang, Guo and Gan, Guobing and Tang, Haomiao and Cheng, Jiale and Qi, Ji and Ji, Junhui and Pan, Lihang and others},
  journal={arXiv preprint arXiv:2507.01006},
  year={2025}
}

@misc{colnomic,
  author       = {{NomicAI}},
  title        = {Nomic Embed Multimodal: Interleaved Text, Image, and Screenshots
                  for Visual Document Retrieval},
  year         = {2025},
  url          = {https://nomic.ai/blog/posts/nomic-embed-multimodal}
}

@misc{qwen25vl,
      title={Qwen2.5-VL Technical Report}, 
      author={Shuai Bai and Keqin Chen and Xuejing Liu and Jialin Wang and Wenbin Ge and Sibo Song and Kai Dang and Peng Wang and Shijie Wang and Jun Tang and Humen Zhong and Yuanzhi Zhu and Mingkun Yang and Zhaohai Li and Jianqiang Wan and Pengfei Wang and Wei Ding and Zheren Fu and Yiheng Xu and Jiabo Ye and Xi Zhang and Tianbao Xie and Zesen Cheng and Hang Zhang and Zhibo Yang and Haiyang Xu and Junyang Lin},
      year={2025},
      eprint={2502.13923},
      archivePrefix={arXiv},
      primaryClass={cs.CV},
      url={https://arxiv.org/abs/2502.13923}, 
}

@misc{monoqwen,
  author       = {{lightonai}},
  title        = {MonoQwen-Vision, the first visual document reranker},
  year         = {2024},
  url          = {https://lighton.ai/lighton-blogs/monoqwen-vision}
}

@inproceedings{visrag,
  title={Visrag: Vision-based retrieval-augmented generation on multi-modality documents},
  author={Yu, Shi and Tang, Chaoyue and Xu, Bokai and Cui, Junbo and Ran, Junhao and Yan, Yukun and Liu, Zhenghao and Wang, Shuo and Han, Xu and Liu, Zhiyuan and others},
  booktitle={International Conference on Learning Representations},
  volume={2025},
  pages={21074--21098},
  year={2025}
}

@article{m3docrag,
  title={M3docrag: Multi-modal retrieval is what you need for multi-page multi-document understanding},
  author={Cho, Jaemin and Mahata, Debanjan and Irsoy, Ozan and He, Yujie and Bansal, Mohit},
  journal={arXiv preprint arXiv:2411.04952},
  year={2024}
}

@inproceedings{vdocrag,
  title={Vdocrag: Retrieval-augmented generation over visually-rich documents},
  author={Tanaka, Ryota and Iki, Taichi and Hasegawa, Taku and Nishida, Kyosuke and Saito, Kuniko and Suzuki, Jun},
  booktitle={2025 IEEE/CVF Conference on Computer Vision and Pattern Recognition (CVPR)},
  pages={24827--24837},
  year={2025},
  organization={IEEE}
}

@inproceedings{mmdocir,
  title={MMDocIR: Benchmarking multimodal retrieval for long documents},
  author={Dong, Kuicai and Chang, Yujing and Deik, Derrick Goh Xin and Li, Dexun and Tang, Ruiming and Liu, Yong},
  booktitle={Proceedings of the 2025 Conference on Empirical Methods in Natural Language Processing},
  pages={30959--30993},
  year={2025}
}

@misc{linq,
      title={Linq-Embed-Mistral Technical Report}, 
      author={Chanyeol Choi and Junseong Kim and Seolhwa Lee and Jihoon Kwon and Sangmo Gu and Yejin Kim and Minkyung Cho and Jy-yong Sohn},
      year={2024},
      eprint={2412.03223},
      archivePrefix={arXiv},
      primaryClass={cs.CL},
      url={https://arxiv.org/abs/2412.03223}, 
}

@article{qwen3emb,
  title={Qwen3 embedding: Advancing text embedding and reranking through foundation models},
  author={Zhang, Yanzhao and Li, Mingxin and Long, Dingkun and Zhang, Xin and Lin, Huan and Yang, Baosong and Xie, Pengjun and Yang, An and Liu, Dayiheng and Lin, Junyang and others},
  journal={arXiv preprint arXiv:2506.05176},
  year={2025}
}
\bibliographystyle{plainnat}

\appendix
\section{Proofs}
\label{sec:proofs}

Throughout, a candidate set has $K$ members and $\langle x, y\rangle_K = \tfrac{1}{K}\sum_{i} x_i y_i$. A vector is standardized when $\langle x, \mathbf{1}\rangle_K = 0$ and $\langle x, x\rangle_K = 1$, which is what Equation \ref{eq:zscore} produces, and $\mathbf{y}$ is standardized the same way. All three of $z(\mathbf{b})$, $z(\mathbf{s})$ and $\mathbf{y}$ are therefore unit vectors in this inner product, so an inner product between any two of them is their correlation.

\subsection{Proposition \ref{prop:weight}}
\label{sec:proof1}

\textbf{The objective.} Write $\alpha = 1-w$ and $\beta = w$, so that $\mathbf{f} = \alpha\, z(\mathbf{b}) + \beta\, z(\mathbf{s})$. Its correlation with $\mathbf{y}$ is the inner product of $\mathbf{y}$ with $\mathbf{f}$ normalized to unit length, and $\mathbf{f}$ is centered because both of its terms are, so no mean has to be removed. Bilinearity gives the numerator

\begin{equation}
  \langle \mathbf{f}, \mathbf{y}\rangle_K
  \;=\; \alpha\, \langle z(\mathbf{b}), \mathbf{y}\rangle_K
      + \beta\, \langle z(\mathbf{s}), \mathbf{y}\rangle_K
  \;=\; \alpha\, c_b + \beta\, c_s ,
\end{equation}

and, using $\langle z(\mathbf{b}), z(\mathbf{b})\rangle_K = \langle z(\mathbf{s}), z(\mathbf{s})\rangle_K = 1$ and $\langle z(\mathbf{b}), z(\mathbf{s})\rangle_K = \rho$, the squared norm

\begin{equation}
  \langle \mathbf{f}, \mathbf{f}\rangle_K
  \;=\; \alpha^{2} + \beta^{2} + 2\alpha\beta\rho .
\end{equation}

Collecting $v = (\alpha, \beta)^{\top}$, $c = (c_b, c_s)^{\top}$ and $\Sigma = \left(\begin{smallmatrix} 1 & \rho \\ \rho & 1 \end{smallmatrix}\right)$, the correlation to be maximized is the ratio

\begin{equation}
  \mathcal{R}(v) \;=\; \frac{v^{\top} c}{\sqrt{v^{\top} \Sigma\, v}} .
  \label{eq:rayleigh}
\end{equation}

\textbf{The maximizer.} The eigenvalues of $\Sigma$ are $1 \pm \rho$, so $|\rho|<1$ makes $\Sigma$ positive definite and $\Sigma^{1/2}$ invertible. Substituting $u = \Sigma^{1/2} v$ turns Equation \ref{eq:rayleigh} into $u^{\top} \Sigma^{-1/2} c / \lVert u \rVert$, which by the Cauchy Schwarz inequality is at most $\lVert \Sigma^{-1/2} c\rVert = \sqrt{c^{\top} \Sigma^{-1} c}$, with equality exactly when $u$ is a positive multiple of $\Sigma^{-1/2} c$. Undoing the substitution, the maximizers are the positive multiples of

\begin{equation}
  \Sigma^{-1} c \;=\; \frac{1}{1-\rho^{2}}
  \begin{pmatrix} 1 & -\rho \\ -\rho & 1 \end{pmatrix}
  \begin{pmatrix} c_b \\ c_s \end{pmatrix}
  \;\propto\;
  \begin{pmatrix} c_b - \rho\, c_s \\ c_s - \rho\, c_b \end{pmatrix} .
  \label{eq:direction}
\end{equation}

\textbf{The normalization.} Equation \ref{eq:rayleigh} is invariant to the scale of $v$, so the whole ray of Equation \ref{eq:direction} attains the maximum and the constraint $\alpha + \beta = 1$ of Equation \ref{eq:fusion} selects one point of it. The two components of Equation \ref{eq:direction} sum to

\begin{equation}
  (c_b - \rho\, c_s) + (c_s - \rho\, c_b) \;=\; (c_b + c_s)(1-\rho),
\end{equation}

which is positive since both components are positive by assumption. Normalizing by this sum gives positive weights with $\alpha+\beta=1$, so the maximizing direction is feasible and its second component is

\begin{equation}
  w^{\star} \;=\; \frac{c_s - \rho\, c_b}{(c_b + c_s)(1-\rho)},
\end{equation}

which is Equation \ref{eq:wstar}.

\textbf{The endpoint.} Since the denominator is positive, $w^{\star} < 1$ is equivalent to $c_s - \rho\, c_b < (c_b + c_s)(1-\rho)$. Expanding the right hand side gives $c_b + c_s - \rho\, c_b - \rho\, c_s$, and cancelling $c_s - \rho\, c_b$ from both sides leaves $0 < c_b - \rho\, c_s$, which is the stated condition. The same computation on the first component gives $w^{\star} > 0$ if and only if $c_s > \rho\, c_b$, so the optimum is interior exactly when each score agrees with relevance by more than a $\rho$ scaled copy of the other would. \qed

\textbf{Weighted positions.} Fix $\omega \in \mathbb{R}^{K}$ with every $\omega_i > 0$ and replace $\langle \cdot, \cdot\rangle_K$ by $\langle x, y\rangle_{\omega} = \big(\sum_i \omega_i\big)^{-1} \sum_i \omega_i\, x_i y_i$, standardizing every vector and defining $c_b$, $c_s$ and $\rho$ in that inner product. Positive weights make $\langle \cdot, \cdot\rangle_{\omega}$ symmetric, bilinear and positive definite, which is everything the argument above uses. The numerator expands by bilinearity, the squared norm by bilinearity and symmetry, and the Cauchy Schwarz step holds in any inner product space. Every line therefore goes through unchanged and the statement holds verbatim. The deployed choice $\omega_i = 1/\log_2(1+\text{rank}_i)$ is the discount of the ranking metric itself, taken under the order the retriever supplies, so it is fixed before any weight is chosen and carries no free parameter.

\subsection{Proposition \ref{prop:bound}}
\label{sec:proof2}

\textbf{A positive semidefinite matrix.} Fix a candidate set and collect the three standardized vectors $z(\mathbf{s})$, $z(\mathbf{t})$ and $\mathbf{y}$. Their matrix of pairwise inner products is

\begin{equation}
  R \;=\; \begin{pmatrix} 1 & r & c_s \\ r & 1 & c_t \\ c_s & c_t & 1 \end{pmatrix},
\end{equation}

whose diagonal is one because each vector is standardized. For any $a \in \mathbb{R}^{3}$, bilinearity gives $a^{\top} R\, a = \big\lVert a_1 z(\mathbf{s}) + a_2 z(\mathbf{t}) + a_3 \mathbf{y} \big\rVert^{2}_{K} \ge 0$, so $R$ is positive semidefinite and in particular $\det R \ge 0$.

\textbf{A quadratic in $c_s$.} Expanding along the first row,

\begin{equation}
  \det R \;=\; \big(1 - c_t^{2}\big) - r\big(r - c_t c_s\big) + c_s\big(r c_t - c_s\big)
          \;=\; 1 - r^{2} - c_t^{2} - c_s^{2} + 2\, r\, c_t\, c_s .
\end{equation}

Requiring this to be non-negative and changing sign gives

\begin{equation}
  c_s^{2} \;-\; 2\, r\, c_t\, c_s \;+\; \big(r^{2} + c_t^{2} - 1\big) \;\le\; 0 .
\end{equation}

The left hand side is a quadratic in $c_s$ with positive leading coefficient, so the inequality confines $c_s$ to the closed interval between its two roots. Those roots are

\begin{equation}
  r\, c_t \;\pm\; \sqrt{r^{2} c_t^{2} - r^{2} - c_t^{2} + 1}
  \;=\; r\, c_t \;\pm\; \sqrt{\big(1 - r^{2}\big)\big(1 - c_t^{2}\big)},
\end{equation}

the discriminant factoring because $r^{2} c_t^{2} - r^{2} - c_t^{2} + 1 = (1-r^{2})(1-c_t^{2})$. The lower root is Equation \ref{eq:bound}, and the upper root is the matching upper bound.

\textbf{Monotonicity.} Write $g(r) = r\, c_t - \sqrt{(1-r^{2})(1-c_t^{2})}$ for the lower root at fixed $c_t$. Differentiating the square root through the chain rule,

\begin{equation}
  g'(r) \;=\; c_t \;+\; \frac{r\big(1 - c_t^{2}\big)}{\sqrt{\big(1-r^{2}\big)\big(1-c_t^{2}\big)}}
        \;=\; c_t \;+\; r\, \sqrt{\frac{1 - c_t^{2}}{1 - r^{2}}} ,
\end{equation}

whose two terms are both positive for $r, c_t \in (0,1)$. So $g$ is strictly increasing there, and raising the agreement of the model score with the target raises the floor under its agreement with relevance. \qed

\section{Additional Results}
\label{sec:extra}

\begin{table}[t]
\centering
\small
\setlength{\tabcolsep}{3.5pt}
\caption{Accuracy against measured cost. Recovery is the fraction of the headroom between the retriever and the full cross encoder that a method closes. Speedup is against the cross encoder on the same pool.}
\label{tab:efficiency_vidore2}
\begin{tabular}{llrrrr}
\toprule
Retriever & Method & NDCG@5 & Recovery & Latency (ms) & Speedup \\
\midrule
\multirow{9}{*}{ColNomic-7B} & Retriever only & 60.9 & -- & 60 & 113$\times$ \\
 & GQR & 63.8 & 24.9\% & 130 & 52$\times$ \\
 & RRF & 62.3 & 11.9\% & 93 & 72$\times$ \\
 & Score Aggregation & 62.2 & 11.4\% & 93 & 72$\times$ \\
 & Qwen3-Reranker-0.6B & 61.6 & 6.4\% & 344 & 20$\times$ \\
 & MonoQwen2-VL & 65.4 & 39.0\% & 2,860 & 2$\times$ \\
 & Ours (Qwen2.5-VL-7B) & 69.0 & 70.4\% & 156 & 44$\times$ \\
 & Ours (GLM-4.1V-9B) & 71.2 & 89.4\% & 141 & 48$\times$ \\
 & Full cross encoder & 72.4 & 100.0\% & 6,765 & 1$\times$ \\
\midrule
\multirow{8}{*}{ColPali-v1.3} & Retriever only & 50.2 & -- & 40 & 166$\times$ \\
 & RRF & 55.8 & 31.8\% & 74 & 90$\times$ \\
 & Score Aggregation & 56.8 & 37.9\% & 74 & 90$\times$ \\
 & Qwen3-Reranker-0.6B & 54.7 & 25.7\% & 330 & 20$\times$ \\
 & MonoQwen2-VL & 62.6 & 71.4\% & 2,808 & 2$\times$ \\
 & Ours (Qwen2.5-VL-7B) & 64.9 & 84.7\% & 136 & 49$\times$ \\
 & Ours (GLM-4.1V-9B) & 66.1 & 91.3\% & 122 & 55$\times$ \\
 & Full cross encoder & 67.6 & 100.0\% & 6,694 & 1$\times$ \\
\bottomrule
\end{tabular}
\end{table}

\begin{table}[t]
\centering
\small
\setlength{\tabcolsep}{4pt}
\caption{The same arms under four further metrics, macro over the corpora of each benchmark. The fusion weight of every arm is set as in Tables \ref{tab:ndcg5_vidore2} and \ref{tab:ndcg5_vidore3} and then held fixed, so none of these four metrics had a part in choosing it.}
\label{tab:metrics}
\resizebox{\textwidth}{!}{%
\begin{tabular}{lllrrrrr}
\toprule
Benchmark & Retriever & Method & NDCG@5 & NDCG@10 & R@5 & R@10 & MRR@10 \\
\midrule
\multirow{17}{*}{ViDoRe 2} & \multirow{9}{*}{ColNomic-7B} & Retriever only & 60.9 & 63.8 & 58.0 & 69.8 & 71.3 \\
 & & GQR & 63.8 & 65.2 & 58.6 & 68.5 & 73.8 \\
 & & RRF & 62.3 & 64.8 & 57.6 & 69.9 & 72.5 \\
 & & Score Aggregation & 62.2 & 65.1 & 57.5 & 69.4 & 72.5 \\
 & & Qwen3-Reranker-0.6B & 61.6 & 64.2 & 58.0 & 69.0 & 71.5 \\
 & & MonoQwen2-VL & 65.4 & 67.9 & 60.3 & 72.3 & 75.7 \\
 & & Ours (Qwen2.5-VL-7B) & 69.0 & 70.5 & 63.9 & 73.8 & 78.0 \\
 & & Ours (GLM-4.1V-9B) & 71.2 & 72.2 & 64.4 & 73.6 & 80.4 \\
 & & Full cross encoder & 72.4 & 73.3 & 65.3 & 73.8 & 82.2 \\
\cmidrule(lr){2-8}
 & \multirow{8}{*}{ColPali-v1.3} & Retriever only & 50.2 & 53.3 & 47.8 & 60.2 & 61.5 \\
 & & RRF & 55.8 & 57.6 & 52.9 & 63.5 & 66.2 \\
 & & Score Aggregation & 56.8 & 58.7 & 52.8 & 63.4 & 67.3 \\
 & & Qwen3-Reranker-0.6B & 54.7 & 56.9 & 52.8 & 64.0 & 64.6 \\
 & & MonoQwen2-VL & 62.6 & 63.7 & 57.1 & 66.4 & 73.3 \\
 & & Ours (Qwen2.5-VL-7B) & 64.9 & 65.0 & 59.3 & 67.0 & 75.0 \\
 & & Ours (GLM-4.1V-9B) & 66.1 & 66.3 & 58.9 & 66.7 & 76.5 \\
 & & Full cross encoder & 67.6 & 67.6 & 60.9 & 67.9 & 77.8 \\
\midrule
\multirow{17}{*}{ViDoRe 3} & \multirow{9}{*}{ColNomic-7B} & Retriever only & 52.8 & 55.3 & 48.2 & 59.8 & 67.0 \\
 & & GQR & 54.3 & 56.8 & 49.4 & 61.3 & 68.6 \\
 & & RRF & 51.4 & 54.6 & 47.3 & 60.4 & 66.2 \\
 & & Score Aggregation & 53.5 & 56.0 & 48.8 & 60.6 & 67.6 \\
 & & Qwen3-Reranker-0.6B & 54.0 & 56.7 & 49.4 & 61.3 & 68.3 \\
 & & MonoQwen2-VL & 54.5 & 56.8 & 49.7 & 60.8 & 68.5 \\
 & & Ours (Qwen2.5-VL-7B) & 56.2 & 58.2 & 51.6 & 62.3 & 69.4 \\
 & & Ours (GLM-4.1V-9B) & 55.8 & 57.8 & 50.5 & 61.5 & 69.8 \\
 & & Full cross encoder & 58.0 & 59.8 & 52.6 & 63.2 & 71.5 \\
\cmidrule(lr){2-8}
 & \multirow{8}{*}{ColPali-v1.3} & Retriever only & 38.8 & 41.3 & 36.4 & 46.5 & 51.8 \\
 & & RRF & 42.9 & 45.6 & 39.5 & 50.6 & 57.3 \\
 & & Score Aggregation & 44.6 & 46.4 & 40.8 & 50.2 & 59.4 \\
 & & Qwen3-Reranker-0.6B & 47.8 & 49.0 & 43.3 & 51.7 & 62.0 \\
 & & MonoQwen2-VL & 46.7 & 48.3 & 41.9 & 51.0 & 61.4 \\
 & & Ours (Qwen2.5-VL-7B) & 48.8 & 50.0 & 44.2 & 52.4 & 63.1 \\
 & & Ours (GLM-4.1V-9B) & 48.8 & 49.8 & 43.8 & 51.9 & 63.2 \\
 & & Full cross encoder & 51.1 & 51.7 & 45.5 & 53.0 & 65.6 \\
\bottomrule
\end{tabular}}
\end{table}

Ranking quality under four further metrics and the measured cost of each arm are reported here.

\subsection{Fusion at Full Depth}

Proposition \ref{prop:weight} concerns any pair of scores, so it applies to the uncompressed cross encoder as much as to the readout. \Cref{tab:fullce_w} sweeps the fusion weight over the full cross encoder on all 24 subsets. No subset attains its optimum at $w = 1$, the median optimum is 0.80, and ranking by the cross encoder alone loses 1.5 points on average and up to 4.3. Keeping the retriever score in the order is therefore no correction for compression, since it improves the very model the readout reproduces.

\begin{table}[h]
\centering
\small
\caption{The full cross encoder fused with the retriever, the optimal weight located per subset on a grid of step 0.05.}
\label{tab:fullce_w}
\resizebox{\textwidth}{!}{%
\begin{tabular}{llrrrrr}
\toprule
Benchmark & Retriever & Subsets & Optimal at $w = 1$ & Median optimal $w$ & NDCG@5 at $w = 1$ & NDCG@5 at optimum \\
\midrule
ViDoRe 2 & ColNomic-7B & 4 & 0 & 0.850 & 72.0 & 73.6 \\
ViDoRe 2 & ColPali-v1.3 & 4 & 0 & 0.875 & 67.0 & 67.9 \\
ViDoRe 3 & ColNomic-7B & 8 & 0 & 0.700 & 55.9 & 58.4 \\
ViDoRe 3 & ColPali-v1.3 & 8 & 0 & 0.825 & 50.7 & 51.5 \\
\bottomrule
\end{tabular}}
\end{table}

\subsection{Number of Calibration Queries}

\Cref{fig:calib} draws the calibration queries at random from the corpora a fold is fitted on, five draws per size, and scores the held out corpus of the 154 query calibration sample. None of these queries carries a relevance judgement. Sixteen queries already reach 74.3 on GLM and 71.1 on Qwen, 99.4 and 98.3 percent of the full sample, and both backbones stay above the 67.2 that GQR scores on the same queries at every size.

\begin{figure}[h]
\centering
\includegraphics[width=0.5\textwidth]{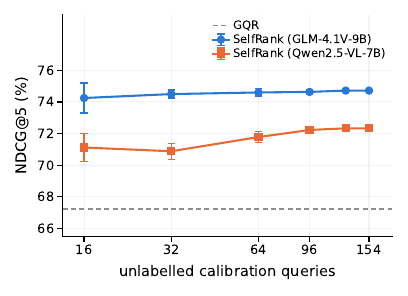}
\caption{NDCG@5 against the number of unlabelled calibration queries, mean and standard deviation over five draws.}
\label{fig:calib}
\end{figure}

\subsection{Label Free Regularization}
\label{sec:lambda}

\Cref{tab:lambda} compares relevance judgements with centered teacher error for selecting $\lambda$. Within each outer fold, two corpora fit the readout and the third validates it; the modal choice across four folds is deployed. On GLM both rules deploy $10^{3}$, yielding bit identical refitted vectors despite different fold choices. On Qwen the teacher rule selects $10^{3}$ in every fold, scoring 72.3 against 71.3 for the labelled rule. Reported results use the teacher rule.

\begin{table}[h]
\centering
\small
\caption{The ridge strength each rule selects on the four folds, the modal value deployed, and NDCG@5 on the calibration sample at that value.}
\label{tab:lambda}
\resizebox{\textwidth}{!}{%
\begin{tabular}{llccccc r}
\toprule
Backbone & Rule & Biomed & Econ & ESG-H & ESG-F & Deployed & NDCG@5 \\
\midrule
GLM-4.1V-9B & Relevance judgements & $10^{3}$ & $10^{4}$ & $10^{3}$ & $10^{1}$ & $10^{3}$ & 74.7 \\
 & Teacher alone & $10^{3}$ & $10^{4}$ & $10^{3}$ & $10^{3}$ & $10^{3}$ & 74.7 \\
\midrule
Qwen2.5-VL-7B & Relevance judgements & $10^{3}$ & $10^{4}$ & $10^{4}$ & $10^{1}$ & $10^{4}$ & 71.3 \\
 & Teacher alone & $10^{3}$ & $10^{3}$ & $10^{3}$ & $10^{3}$ & $10^{3}$ & 72.3 \\
\bottomrule
\end{tabular}}
\end{table}

\subsection{Pool Depth for Every Method}

\Cref{tab:depth_all} extends \Cref{fig:pooldepth} to every reranker of Table \ref{tab:ndcg5_vidore2}, \method{} setting its weight by Equation \ref{eq:wstar} and each baseline by NDCG@5 on the same inner corpus, at every depth. At every depth from five to a hundred candidates both \method{} backbones rank above all three external baselines, by at least 5.0 points from ten candidates onward.

\begin{table}[h]
\centering
\small
\caption{NDCG@5 on ViDoRe 2 with ColNomic candidates against the number of candidates reranked. The retriever alone scores 60.9 and GQR 63.8 at every depth.}
\label{tab:depth_all}
\resizebox{\textwidth}{!}{%
\begin{tabular}{rrrrrr}
\toprule
$K$ & RRF & Score Aggregation & Qwen3-Reranker-0.6B & \method{} (Qwen2.5-VL-7B) & \method{} (GLM-4.1V-9B) \\
\midrule
5 & 61.9 & 61.4 & 60.8 & 63.7 & 65.2 \\
10 & 62.8 & 62.2 & 60.3 & 67.8 & 69.5 \\
15 & 62.6 & 62.5 & 60.5 & 68.9 & 70.8 \\
20 & 62.3 & 62.2 & 61.6 & 69.0 & 71.1 \\
25 & 62.0 & 62.5 & 60.8 & 69.2 & 71.1 \\
30 & 61.9 & 62.3 & 61.1 & 69.4 & 71.0 \\
40 & 61.5 & 62.6 & 61.1 & 69.5 & 71.0 \\
50 & 61.5 & 62.6 & 61.0 & 69.6 & 70.8 \\
60 & 61.6 & 62.5 & 61.7 & 69.6 & 70.4 \\
70 & 61.7 & 62.8 & 61.7 & 69.6 & 70.8 \\
80 & 61.5 & 62.9 & 61.9 & 69.8 & 70.6 \\
90 & 61.5 & 64.1 & 62.6 & 70.1 & 70.5 \\
100 & 61.5 & 64.0 & 62.1 & 70.2 & 70.5 \\
\bottomrule
\end{tabular}}
\end{table}

\subsection{Operating Point Without Relevance Judgements}
\label{sec:labelfree_w}

Proposition \ref{prop:weight} requires the correlation of each score with relevance, and \Cref{tab:labelfree_w} shows that both can be recovered from the scores alone. Under a one factor model in which the retriever score, the readout and the full depth margin of the uncompressed model each carry relevance plus mutually uncorrelated noise, $c_b=\sqrt{\rho_{bs}\rho_{bt}/\rho_{st}}$ and $c_s=\sqrt{\rho_{bs}\rho_{st}/\rho_{bt}}$, where each $\rho$ is a correlation between two scores and involves no relevance judgement. Read on the unlabelled queries of the corpus being ranked, this estimate reaches 59.9 on average against 60.1 for the deployed rule, and sixteen unlabelled queries per corpus already give 59.8.

\begin{table}[h]
\centering
\small
\caption{NDCG@5 (\%) of \method{} under three ways of setting the fusion weight. Sweep picks $w$ by NDCG@5 on the next corpus of the same benchmark. Proposition \ref{prop:weight} reads the three correlations on that corpus with its relevance judgements, as in \Cref{tab:ndcg5_vidore2,tab:ndcg5_vidore3}. Label free reads only correlations between scores, on the unlabelled queries of the corpus being ranked, using all of them or sixteen drawn at random with mean and standard deviation over five draws. The third score is the full depth margin of Qwen2.5-VL-7B.}
\label{tab:labelfree_w}
\begin{tabular*}{\textwidth}{@{\extracolsep{\fill}}lllrrrr}
\toprule
 & & & & & \multicolumn{2}{c}{Label free} \\
\cmidrule(lr){6-7}
Benchmark & Retriever & Backbone & Sweep & Proposition \ref{prop:weight} & All queries & 16 queries \\
\midrule
\multirow{4}{*}{ViDoRe 2} & \multirow{2}{*}{ColNomic-7B} & Qwen2.5-VL-7B & 69.6 & 69.0 & 69.4 & 69.4 {\scriptsize $\pm$0.2} \\
 & & GLM-4.1V-9B & 71.3 & 71.2 & 70.7 & 70.2 {\scriptsize $\pm$0.4} \\
 & \multirow{2}{*}{ColPali-v1.3} & Qwen2.5-VL-7B & 65.1 & 64.9 & 65.2 & 65.2 {\scriptsize $\pm$0.1} \\
 & & GLM-4.1V-9B & 65.8 & 66.1 & 66.3 & 65.8 {\scriptsize $\pm$0.3} \\
\midrule
\multirow{4}{*}{ViDoRe 3} & \multirow{2}{*}{ColNomic-7B} & Qwen2.5-VL-7B & 56.2 & 56.2 & 54.7 & 54.6 {\scriptsize $\pm$0.1} \\
 & & GLM-4.1V-9B & 55.5 & 55.8 & 55.9 & 55.8 {\scriptsize $\pm$0.1} \\
 & \multirow{2}{*}{ColPali-v1.3} & Qwen2.5-VL-7B & 48.6 & 48.8 & 48.5 & 48.5 {\scriptsize $\pm$0.1} \\
 & & GLM-4.1V-9B & 48.6 & 48.8 & 48.8 & 48.7 {\scriptsize $\pm$0.0} \\
\midrule
\multicolumn{3}{l}{Average} & 60.1 & 60.1 & 59.9 & 59.8 \\
\bottomrule
\end{tabular*}
\end{table}

\end{document}